\title{\LARGE 
Network Localization by Shadow Edges
}
\author{Gabriele Oliva$^*$, Stefano Panzieri$^\dagger$, Federica Pascucci$^\dagger$ and Roberto Setola$^*$\\
$^*$ University Campus Biomedico of Rome, Italy\\
g.oliva@unicampus.it r.setola@unicampus.it\\
$^\dagger$ Dipartimento di Informatica e Automazione, University ``Roma TRE'',\\
Via della Vasca Navale, 79, 00146, Roma, Italy.\\
panzieri@uniroma3.it pascucci@dia.uniroma3.it
}
\newtheorem{definition}{\textbf{Definition}}
\newtheorem{lemma}{\textbf{Lemma}}
\newtheorem{theorem}{\textbf{Theorem}}
\newtheorem{corollary}{\textbf{Corollary}}
\begin{document}

\maketitle
\thispagestyle{empty}
\pagestyle{empty}

\begin{abstract}
Localization is a fundamental task for sensor networks. 
Traditional network construction approaches allow to obtain localized networks requiring the nodes to be at least tri-connected (in 2D), i.e., the communication graph needs to be globally rigid. In this paper we exploit,  besides the information on the neighbors sensed by each robot/sensor, also the information about the lack of communication among nodes. The result is a framework where  the nodes are required to be bi-connected and the communication graph has to be rigid. This is possible considering a novel typology of link, namely Shadow Edges, that account for the lack of communication among nodes and allow to reduce the uncertainty associated to the position of the nodes.
\end{abstract}


\section{Introduction}
\label{sec:introduction}
Location service is a fundamental building block of many emerging computing/networking paradigms. 
Sensor location information is of the utmost importance for both sensor data integrity and important network management issues such as coverage and data delivery. Most of the approaches proposed in literature rely on Global Navigation Satellite System (GNSS) \cite{HoLi01}. However, GNSS is not suitable  for large-scale sensor localization due to its high cost, large form factor and, least but not last, environmental constraints. GNSS, indeed, requires direct line of sight to satellites. As a consequence, it does not work indoor, underground, and underwater. Moreover, due to the well known urban-canyon problem, GNSS provides poor accuracy in large metropolitan areas.

Recently, novel schemes have been proposed to determine the locations of the nodes in a network by exploiting some special nodes (called \textit{anchors}) which know their locations. According to this approach, localization algorithms derive the locations of sensor nodes from local measurements such as \textcolor{black}{relative} distance and angle estimations between neighbors \cite{ErGol93, GoKo04, MoLeo04}. 
Existing localization algorithms  \textcolor{black}{require} either the connectivity graph \cite{ShaRu03}, or the distances between neighboring sensor nodes \cite{SoYe05, ErGol93, GoKo04, MoLeo04} as input. One major challenge using this approach is localization ambiguity, since multiple, different localization solutions can satisfy all the distance constraints even if they are far from each other \cite{YuZhe11}. 

The problem of whether a graph with given edge length constraints admits a unique embedding in the plane is studied by rigidity theory \cite{La70}. A graph is rigid in the plane if one cannot continuously deform the shape of the graph without altering the lengths of the edges. A graph is globally rigid if it admits a unique embedding in the plane, subject to global rotations and translations. The theory of graph rigidity in 2D has been widely studied and well understood. 

In graph rigidity literature, many efforts have been made to explore the combinatorial conditions for rigidity. The Laman condition \cite{La70} characterizes graphs that are generically rigid. An efficient algorithm, the pebble game \cite{JaHe97}, is able to test whether a graph is generically rigid in time $O(nm)$ where $n$ is the number of nodes and $m$ is the number of edges. Similarly, for global rigidity, a sufficient and necessary condition \cite{JaJo05}  
based on the results in \cite{He92} \textcolor{black}{has been identified} by combining both redundant rigidity and $3$--connectivity. Both a combinatorial characterization of globally rigid graphs and polynomial algorithms for testing such graphs are \textcolor{black}{however,} not trivial to apply in the development of efficient localization algorithms. Given a graph with edge lengths specified, finding a valid graph realization in $\mathbb{R}^{d}$ for a fixed dimension $d$ is an NP-complete problem \cite{Sa79}.

Recently, Jackson and Jordan \cite{JaJo08} \textcolor{black}{proved} a sufficient condition based on $6$ mixed connectivity, which improves a previous result of $6$--connectivity by \cite{LoYe82}. There are also some results for random geometric graphs. Assuming the unit disk model, many researchers \cite{GuKu98, LiWa99, Pe99, XuKu04} considered critical conditions for graph connectivity. Simulation results \cite{ErGol93} ensure that the hitting radius of global rigidity is between $3$-- and $6$--connectivity in \textcolor{black}{a probabilistic} sense.

To the best of our knowledge, all the approaches proposed in literature partially exploit the topology information. They consider as good, indeed, only the information of \textit{being connected} and completely discarded the one of \textit{not being connected}. 
\textcolor{black}{This papers shows how to exploit connectivity graph taking into account both the information on connectivity (i.e., the presence of an edge between two nodes) and lack of connectivity (i.e., the absence of a link).}
Specifically the \textit{being connected} relationship carries information about where a node is located (\textcolor{black}{i.e.}, \textit{admissible regions}), while the \textit{not being connected} one carried information about where a node cannot be located (\textcolor{black}{i.e.,} \textit{forbidden regions}). 
In this paper the problem of finding a unique realization is addressed by exploiting both the admissible and forbidden region actively. According to this approach the conventional rigidity requirement is no longer a necessary condition to find a unique realization. To this end, rigid graphs become only a subset of the total set of networks where a unique solution can be found. To demonstrate this, the notions of \textit{shadow edges} and \textit{shadow graph} are introduced and the realization problem is studied considering a combination of both \textcolor{black}{regular and shadow edges}. 
Furthermore, an algorithm to build localized network having polynomial complexity is proposed.

The paper is organized as follows: Section \ref{sec:localization_and_rigidity} provides a general overview on localization and rigid graphs; the proposed approach is detailed in Section \ref{sec:localization_and_shadow};
some simulation results and comparison with standard localized network construction algorithm are provided in Section \ref{sec:simulation_results}; finally, some remarks and future work directions are collected in Section \ref{sec:conclusions}.

\section{Localization and Rigidity}
\label{sec:localization_and_rigidity}
In this section, the network localization problem with distance information is addressed: as explained above, it can be studied in the framework of graph theory.

To this end, we need some initial definitions.

\begin{definition}
Let $\mathcal{G}=\{\mathcal{V},\mathcal{E}\}$ be a {\em graph} with $n$ nodes, where  the set $\mathcal{V}$ denotes the nodes ${\it v}_1, \ldots, {\it v}_p$; $\mathcal{E}$ is the set of edges $({\it v}_i,{\it v}_j)$.
The graph can be also represented by means of an {\em adjacency matrix} ${ \Gamma}=\{\gamma_{ij}\}$ that is composed of non-negative entries  $\gamma_{ij} =1 \Leftrightarrow ({\it v}_i,{\it v}_j) \in \mathcal{E} $,
i.e., there exists an arc that starts form node $v_i$ and reaches node $v_j$ (note that it is also possible to consider non-unitary weights). 

The graph $\mathcal{G}$ is said to be \emph{undirected} if $({\it v}_j,{\it v}_i)\in \mathcal{E}$ whenever $({\it v}_i,{\it v}_j)\in \mathcal{E}$ (the weights are $\gamma_{ij}=\gamma_{ji}$ in this case); otherwise the graph $\mathcal{G}$ is said to be \emph{directed}.
\end{definition}

\begin{definition}
Let a set of $n$ sensors $\Sigma=\{\sigma_i\}$ embedded in $\mathbb{R}^d$, and let $d_{ij}$ be the  distances  between sensors $\sigma_{i}$ and  $\sigma_{j}$. Suppose that the coordinates $p_i \in \mathbb{R}^d$ of small number of sensors $\sigma_j\in \Sigma_j \subseteq \Sigma$ are known, and denote $\Sigma_j$ as the \textit{kernel sensor set}.
Each sensor $\sigma_i$is assumed to be located at a fixed position in $\mathbb{R}^{d}$ and is associated with it a specific set of neighboring sensors. In this paper only symmetric neighboring relation is considered, thus a sensor $\sigma_{j}$ is a neighbor of a sensor  $\sigma_{i}$ if and only if sensor  $\sigma_{i}$ is also a neighbor of sensor $\sigma_{j}$.
 The \emph{localization problem} consists in finding a map $Q : \Sigma \rightarrow \mathbb{R}^{d}$ (where $d$ is $2$ or $3$) which assigns coordinates $p_{i}\in \mathbb{R}^{d}$ to each sensor $\sigma_{i}$ such that $||p_{i}-p_{j}|| = d_{ij}$ holds for all pairs $(i, j)$ for which $d_{ij}$ is given, and the assignment is consistent with any sensor coordinate assignments provided in the problem statement.
\end{definition}

Under these conditions, a graph $\mathcal{G}=\{\mathcal{V},\mathcal{E}\}$ can be correlated with with a sensor network by associating a vertex $v_i$ of the graph with each sensor $\sigma_i$, and an edge of the graph $({\it v}_i,{\it v}_j)$ with each sensor pair $\sigma_i,\sigma_j$ for which the inter-sensor distance is known.
\begin{definition}
A \emph{$d$--dimensional framework} $(\mathcal{G}, Q)$ is a graph $\mathcal{G}=\{\mathcal{V},\mathcal{E}\}$ together with a map  $Q : \mathcal{G}\rightarrow \mathbb{R}^{d}$. The framework is a \emph{realization} if it results in $||p_{i}-p_{j}|| = d_{ij}$  for all pairs $(\sigma_i, \sigma_j)$ such that $(v_i, v_j)\in \mathcal{E}$. According with this approach, the localization over $\Sigma$ problem is mapped into a realization problem over $(\mathcal{G}, Q)$.
\end{definition}

\begin{definition}
The network localization problem just formulated is said to be {\em solvable} if there is exactly one set of vectors $\{ p_{1}, p_{2}, \dots p_{n}\}$ is consistent with the given data $(\mathcal{G},Q)$. 
\end{definition}

To understand the solvability of the localization problem, let us consider the notions of rigidity and global rigidity.

\begin{definition}
Two frameworks $(\mathcal{G},Q)$ and $(\mathcal{G},Q^*)$ are \textit{equivalent} if $||p_{i}-p_{j}|| = ||p^*_{i}-p^*_{j}||$ holds for all pairs $(i, j)$ with $(v_i, v_j)\in \mathcal{E}$. 
\end{definition}

\begin{definition}
Two frameworks $(\mathcal{G},Q)$ and $(\mathcal{G},Q^*)$ are \textit{congruent} if if $||p_{i}-p_{j}|| = ||p^*_{i}-p^*_{j}||$ holds for all pairs $i, j$ with $v_i, v_j \in \mathcal{V}$. 
\end{definition}

This is the same as saying that $(\mathcal{G},Q)$ can be obtained from $(\mathcal{G},Q^*)$ by an isometry over $\mathbb{R}^{d}$, i.e., a combination of translations, rotations and reflections.

\begin{definition}
A framework $(\mathcal{G},Q)$ is \textit{rigid} if there exists a sufficiently small positive $\epsilon$ such that if $(\mathcal{G},Q)$ is equivalent to $(\mathcal{G},Q^*)$ and $||p_i - q_i||< \epsilon$ for all $v_i \in \mathcal{V}$ then $(\mathcal{G},Q)$ is congruent to $(\mathcal{G},Q^*)$. 
\end{definition}

Intuitively, a rigid framework cannot flex. 
\begin{definition}
A framework $(\mathcal{G},Q)$ is \textit{globally rigid} if every framework which is equivalent to $(\mathcal{G},Q)$ is congruent to $(\mathcal{G},Q)$. 
\end{definition}

Obviously, if $\mathcal{G}$ is a complete graph then the framework $(\mathcal{G},Q)$ is necessarily globally rigid.

According with this approach, a localization problem can be solved only if the graph framework $(\mathcal{G},Q)$ is globally rigid (i.e., otherwise the position of the sensors would not be univocally determined, since flipping some node would result in another admissible solution). 

Although the realization of general globally rigid graphs is hard,  a class of globally rigid graphs can be computationally efficient to realize by means of {\em trilateration}. Trilateration is the operation whereby a node with known distances from three $(d = 2)$ or four $(d = 3)$ other nodes, determines its own position in terms of the positions of those neighbors. An easy way to obtain a localizable network (i.e., globally rigid graph) is by iteratively adding nodes attached to al least three (four) nodes. This in the following will be referred to as \emph{Trilateration Network  Construction} (TNC) algorithm.

In the following, we demonstrate that considering also information about non-neighboring nodes, the global rigidity assumption can be relaxed and a different and more effective procedure to build localizable network can be set up.

It is worth to underline that, given the graph and distance set of a globally rigid framework, there is not enough information to position the framework absolutely in $\mathbb{R}^{d}$. To do this requires the absolute position of at least three $(d = 2)$ or four vertices $(d = 3)$ non-collinear vertices, i.e., \textit{kernel} nodes.

\section{Localization and Shadow Edges}
\label{sec:localization_and_shadow}
 In the following we will assume to generate a network iteratively, by adding each node to a number of existing nodes.
 Specifically, each node $i$ is provided with a maximum communication radius $\rho$ (assumed to be the same for all the agents) and is able to detect the presence of each and every node $j$ that falls within such communication radius, obtaining also an information on the distance $d_{ij}\leq \rho$ among them (we assume $d_{ij}=d_{ji}$).

\begin{definition}
The set of {\em localization options} $\mathcal{L}_i(\mathcal{G})$ for a node $v_i$ in a graph $\mathcal{G}$, is the set of points $[x,y]^T \in \mathbb{R}^2$ that are admissible for the position of node $v_i$, given the structure of the graph $\mathcal{G}$.  
 \end{definition}
 
  \begin{definition}
 A node $v_i$ is {\em localized} over a graph $\mathcal{G}$ provided that its position is univocally determined. 
 Otherwise the node is not localized.
 \end{definition}

 \begin{definition}
 A graph $\mathcal{G}$ is {\em localized} provided that each node $v_i \in \mathcal{V}$ is localized.
 \end{definition}

Note that a node $v_i$, being able to sense  its neighbors $v_j$ within the communication range $\rho$,
is able to obtain the distance $d_{ij}$; moreover, if  the nodes $v_j$ are localized, their position is known. 
Therefore, assuming the graph $\mathcal{G}$ to be connected the following options are possible, depending on the cardinality $|\mathcal{L}_i(\mathcal{G})|$:
\begin{itemize}
\item $|\mathcal{L}_i(\mathcal{G})|=1$ if and only if node $v_i$ is connected to at least $3$ non collinear localized nodes;
\item $|\mathcal{L}_i(\mathcal{G})|=2$ if and only if node $v_i$ is connected to $2$ localized nodes;
\item $|\mathcal{L}_i(\mathcal{G})|=\infty$ if node $v_i$ is connected to $1$ localized node
\end{itemize}
 
 \begin{definition}
A {\em sensing disk} $S_{i,h}$ for a node $v_i$ is a disk or radius $\rho$ centered in a localization option $[x_h,y_h]^T \in \mathcal{L}_i(\mathcal{G})$, i.e., 
\begin{equation}
S_{i,h}=\{[x,y]^T \in \mathbb{R}^2: (x-x_h)^2+(y-y_h)^2\leq \rho^2,\}.
\end{equation}

 \end{definition}

\begin{definition}
The {\em admissible sensing region} for a node $v_i$, denoted by $\mathcal{D}_i$ is the union of the sensing disks $S_{i,h}$ for each of the localization options $[x_h,y_h]^T \in \mathcal{L}_i(\mathcal{G})$.
\end{definition}

As discussed above, if each node is connected to at least $3$ non collinear localized nodes, the network is localized.
However it will be shown that, if a node $v_i$ is such that $|\mathcal{L}_i(\mathcal{G})|=2$ (e.g., it is connected to $2$ localized nodes), it may still be possible to localize it.
To this end we need to define the following edge class.

 \begin{definition}
A {\em shadow edge} is an edge that connects two nodes $v_i$ and $v_j$ if and only if one of the following holds true:
\begin{itemize}
\item $v_i$ is localized over $\mathcal{G}$ and $|\mathcal{L}_j(\mathcal{G})|=2$; $\mathcal{L}_i(\mathcal{G}) \in \mathcal{D}_j$ but $d_{ij}>\rho$;
\item $v_j$ is localized over $\mathcal{G}$ and $|\mathcal{L}_i(\mathcal{G})|=2$; $\mathcal{L}_j(\mathcal{G}) \in \mathcal{D}_i$ but $d_{ji}>\rho$.
 \end{itemize}
 \end{definition}

Such an edge can be created when node $v_i$ is not detected by $v_j$ but $v_i\in \mathcal{D}_j$ or vice versa. Hence, it represents the fact that node $v_i$ has $2$ localization options and one of them should not be taken into account; in fact if node $v_i$ was in the localization option such that $v_j$ lies in the corresponding disk, it should have sensed node $v_j$. Therefore node $v_i$ it is in the other localization option.
Note that the node $v_j$ cannot lie in the intersection of the $2$ disks, otherwise it would have been connected to $v_i$ with a regular edge.
Figure \ref{fig:esempio_4_nodi} shows an example of localization by means of a shadow edge.

\begin{figure}[!ht]
\begin{center}
\includegraphics[width=3.5in]{./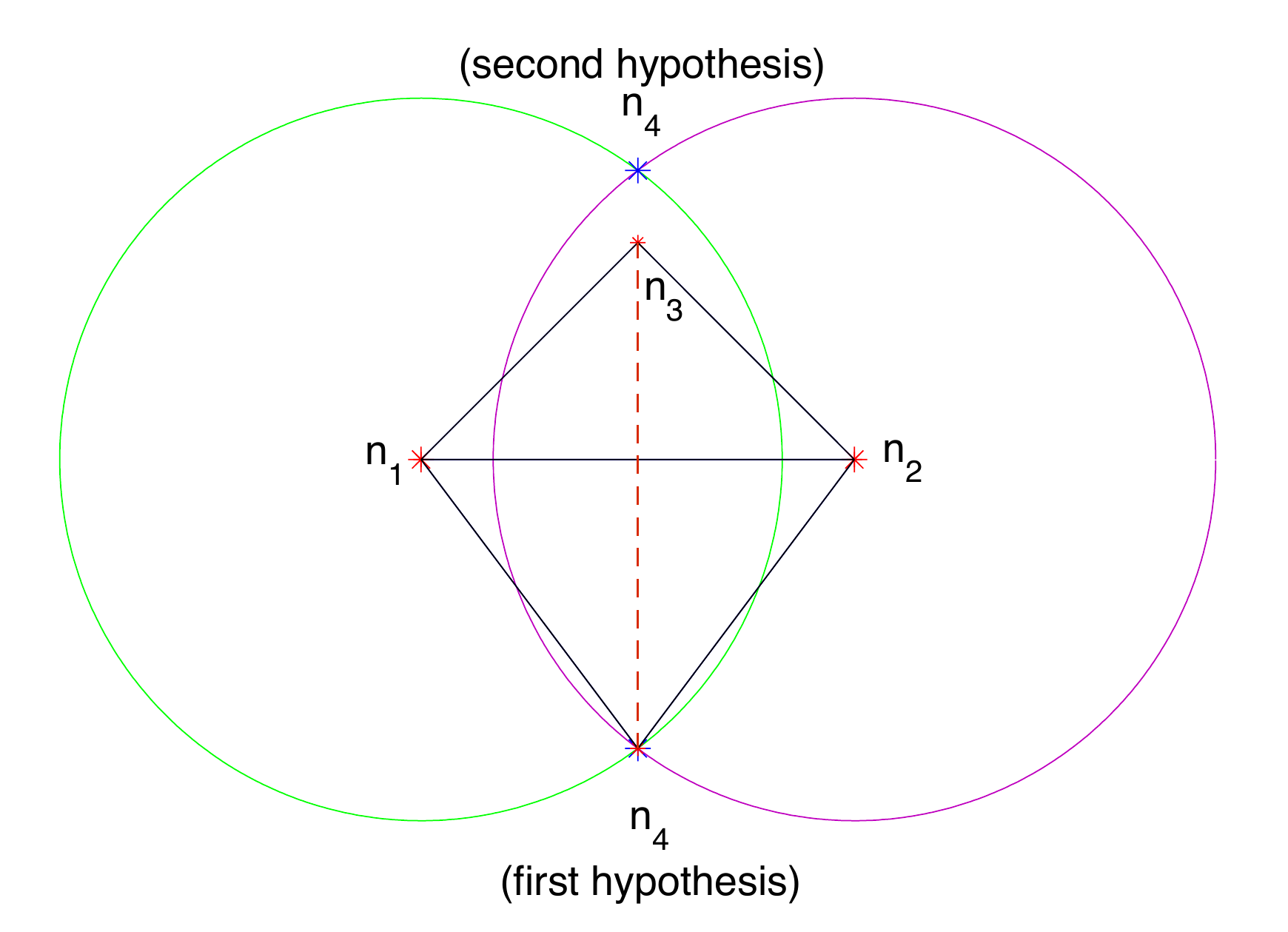}
\caption{Example of shadow edge: a new node $n_4$ attempts to connect to the network. Since the node $n_4$ senses nodes $n_1$ and $n_2$ two hypotheses are possible for its position (i.e., the intersection of the green and purple circumferences). Since node $n_3$ is not detected by $n_4$, it is possible to exclude hypothesis $2$ and node $n_4$ is localized by adding a shadow edge (red dotted line) between $n_4$ (first hypothesis) and $n_3$.}
 \label{fig:esempio_4_nodi}
\end{center}
\end{figure}

 \begin{definition}
A {\em shadow graph} is the graph $$\mathcal{G}_s=\{\mathcal{V},\mathcal{E}_s\}$$ where $\mathcal{V}$ is the set of nodes of graph $\mathcal{}G$ and $\mathcal{E}_s$ is the set of the shadow edges.
 \end{definition}
 
  \begin{definition}
A {\em extended shadow  graph} is the graph $$\mathcal{G}_e=\{\mathcal{V},\mathcal{E}\cup \mathcal{E}_s\}$$ where $\mathcal{G}=\{\mathcal{V},\mathcal{E}\}$ is the nominal graph and $\mathcal{E}_s$ is the set of the shadow edges.
 \end{definition}
 In the following we will assume, for the sake of simplicity and without loss of generality, that the event of connecting a node $v_i$ to three collinear nodes $v_j,v_k,v_h$ either in $\mathcal{G}$ and in $\mathcal{G}_e$ is never verified.
   \begin{definition}
A {\em minimal shadow  graph} is the graph $$\mathcal{G}_s=\{\mathcal{V},\mathcal{E}^*_s\}$$ where $\mathcal{G}=\{\mathcal{V},\mathcal{E}\}$ is the nominal graph and $\mathcal{E}^*_s$ is a {\em minimal set of shadow edges} defined as follows:
$$
\forall i:  |\mathcal{L}_i(\mathcal{G})|=2 \quad \exists ! \mbox{ shadow edge } (v_i, v_j)\in \mathcal{E}^*_s.
$$

 \end{definition}
Let us now provide some results on localization using such shadow edges.

\begin{lemma}
Suppose that a graph $\mathcal{G}$ is connected and that $\mathcal{E}_s$ is a set of shadow edges. Then the nodes are localized over $\mathcal{G}_e$ if and only if the extended shadow graph $\mathcal{G}_e$ is globally rigid.
\end{lemma}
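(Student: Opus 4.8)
The plan is to reduce the statement to the classical equivalence between unique realizability and global rigidity, the nontrivial part being to show that shadow edges may be treated as ordinary distance constraints inside $\mathcal{G}_e$. First I would observe that a shadow edge $(v_i,v_j)$ arises precisely when $v_j$ (say) admits the two localization options collected in $\mathcal{L}_j(\mathcal{G})$, one of which lies inside the sensing disk of the localized node $v_i$ while the other does not. The non-communication condition $d_{ij}>\rho$ discards the option inside $\mathcal{D}_i$, leaving a single admissible point whose distance from $p_i$ is some $d_{ij}^\star>\rho$. Imposing the equality $\|p_i-p_j\|=d_{ij}^\star$ selects exactly that same point out of the two options, since the discarded one lies at distance $\le\rho\ne d_{ij}^\star$ from $p_i$. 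Hence every edge of $\mathcal{G}_e$, regular or shadow, acts as a genuine distance constraint, and $(\mathcal{G}_e,Q)$ becomes an ordinary framework to which rigidity theory applies.

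For the \emph{if} direction, suppose $(\mathcal{G}_e,Q)$ is globally rigid. By definition every equivalent framework is congruent to it, so its realization is determined up to an isometry of $\mathbb{R}^2$. The construction always retains at least three non-collinear kernel nodes with known coordinates, and no nontrivial isometry of $\mathbb{R}^2$ fixes three non-collinear points; therefore the isometric freedom is removed and the realization is unique in absolute coordinates. Consequently $|\mathcal{L}_i(\mathcal{G}_e)|=1$ for every $v_i$, i.e. the graph is localized over $\mathcal{G}_e$.

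For the \emph{only if} direction I would argue by contraposition. If $\mathcal{G}_e$ is not globally rigid there exists a framework $(\mathcal{G}_e,Q^\star)$ equivalent but not congruent to $(\mathcal{G}_e,Q)$. Since the kernel nodes have mutual distances determined within $\mathcal{G}_e$ and are non-collinear, the unique isometry $g$ taking their images under $Q^\star$ back onto their known positions exists; the aligned framework $g\circ Q^\star$ then satisfies all distance constraints \emph{and} the kernel conditions, so it is a valid localization. Because $Q^\star$ is not congruent to $Q$, the aligned framework differs from $Q$ at some node, producing two distinct admissible positions and contradicting the hypothesis that every node is localized. Hence localizability forces global rigidity.

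The step I expect to be the main obstacle is the reduction in the first paragraph: making precise that the \emph{selection} constraint carried by a shadow edge (eliminate the option lying inside the neighbour's disk) has exactly the same effect on the solution set as the \emph{metric} constraint $\|p_i-p_j\|=d_{ij}^\star$, so that global rigidity of $\mathcal{G}_e$ viewed as a distance framework coincides with localizability. This relies on the standing assumption that no node is ever attached to three collinear nodes (already granted in the excerpt) together with the fact, noted after the shadow-edge definition, that the two options are distinct and unequally distant from $v_i$, so that no spurious coincidence can reintroduce the discarded option. A secondary technical point is verifying that the kernel sub-framework remains distance-determined throughout the construction, which is what licenses the alignment isometry $g$ in the only-if direction.
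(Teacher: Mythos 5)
Your proposal is correct, and its skeleton (prove both implications, the ``only if'' direction by contradiction/contraposition) matches the paper's. The difference is in depth: the paper's own proof is essentially definitional --- it asserts that global rigidity of $\mathcal{G}_e$ gives $|\mathcal{L}_i(\mathcal{G}_e)|=1$ for every node, and, for the converse, that failure of global rigidity would force some node to have $|\mathcal{L}_i|\geq 2$, contradicting localization --- without justifying either assertion. You supply exactly the two ingredients the paper glosses over. First, you note that ``globally rigid'' is only meaningful for $\mathcal{G}_e$ once shadow edges carry lengths, and you verify that the metric constraint $\|p_i-p_j\|=d_{ij}^\star>\rho$ cuts down the solution set identically to the original exclusion constraint (the discarded option lies at distance $\leq\rho<d_{ij}^\star$ from $p_i$), so rigidity theory legitimately applies to the extended graph. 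Second, you handle the gap between ``unique up to isometry'' (global rigidity) and ``unique in absolute coordinates'' (localization) via the three non-collinear kernel nodes, including the alignment isometry $g$ in the contrapositive direction, where one must check that $g\circ Q^\star$ still satisfies all constraints and differs from $Q$ at some node. The paper's brevity comes from implicitly identifying localization with unique realization and ignoring the isometric freedom (which it only acknowledges informally at the end of Section II); your version makes the lemma actually rigorous and shows where the standing assumptions --- non-collinearity and the fact that the two localization options are unequally distant from the localized neighbor --- are genuinely used.
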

\begin{proof}
Clearly, if $\mathcal{G}_e$ is globally rigid, then $|\mathcal{L}_i(\mathcal{G}_e)|=1$ for each node $i=1, \ldots, n$ and the nodes are  localized. 
To prove the reversed implication, suppose that the nodes are all localized over $\mathcal{G}_e$ but $\mathcal{G}_e$ is not globally rigid: then there is at least a node $i$ such that $|\mathcal{L}_i(\mathcal{G})|\geq 2$, but this is not possible, since the network is localized, yielding to an absurd.
%
%
%
\end{proof}

The above Lemma, therefore, allows to infer the localization of the nodes in $\mathcal{G}$ by considering the union of the regular edges and of the shadow edges. According to such an approach, let us provide the following Theorem.

\begin{theorem}
Suppose that the graph $\mathcal{G}$ is connected and that $\mathcal{E}_s$ is a set of shadow edges; then the following propositions are equivalent:
\begin{enumerate}
\item the nodes are localized over $\mathcal{G}_e$;
\item $\max_{i=1,\ldots,n}|\mathcal{L}_i(\mathcal{G})|\leq2$ and each node $v_i\in \mathcal{G}$ such that $|\mathcal{L}_i(\mathcal{G})|=2$ is connected to at least $2$ localized nodes and there is at least a  shadow edge $(v_i,v_j)$ with $v_j \in \mathcal{V}$

\item $\mathcal{G}$ is locally rigid and there is a minimal shadow edge set $\mathcal{E}^*_s \in \mathcal{E}_s$.

\end{enumerate}
\end{theorem}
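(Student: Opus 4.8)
The plan is to prove the three statements equivalent by establishing the cycle $(1)\Rightarrow(2)\Rightarrow(3)\Rightarrow(1)$, leaning on the preceding Lemma (nodes localized over $\mathcal{G}_e$ iff $\mathcal{G}_e$ is globally rigid) and on the option trichotomy, namely that $|\mathcal{L}_i(\mathcal{G})|$ equals $1$, $2$, or $\infty$ according to whether $v_i$ has at least three non-collinear, exactly two, or a single localized neighbor. Throughout I would read $\mathcal{L}_i(\mathcal{G})$ as the set of options that survives after the localization has been propagated as far as the regular edges of $\mathcal{G}$ permit, so the three cases are mutually exclusive. For $(1)\Rightarrow(2)$, assume the nodes are localized over $\mathcal{G}_e$, hence $\mathcal{G}_e$ is globally rigid. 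I would first rule out $|\mathcal{L}_i(\mathcal{G})|=\infty$: such a node has a single localized neighbor and lies on a full circle, and since the definition of a shadow edge \emph{requires} its endpoint to satisfy $|\mathcal{L}_i(\mathcal{G})|=2$, no shadow edge can attach to it, so it would remain un-localized in $\mathcal{G}_e$, a contradiction; this gives $\max_i|\mathcal{L}_i(\mathcal{G})|\le 2$. Next, a node with $|\mathcal{L}_i(\mathcal{G})|=2$ is, by the trichotomy, connected to exactly two localized nodes, and its two options are mirror images across the line through those neighbors; the only element of $\mathcal{E}\cup\mathcal{E}_s$ that can discard one of them without turning the node into a $1$-option node is a shadow edge, so at least one shadow edge $(v_i,v_j)$ must be present, yielding $(2)$.

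For $(2)\Rightarrow(3)$, the hypothesis $\max_i|\mathcal{L}_i(\mathcal{G})|\le 2$ says every node is either trilaterated ($1$ option) or bilaterated ($2$ options, a reflective flip), so $\mathcal{G}$ admits only finitely many realizations up to congruence and is therefore locally rigid; selecting exactly one of the guaranteed shadow edges at each $2$-option node produces a set $\mathcal{E}^*_s\subseteq\mathcal{E}_s$ meeting the defining condition of a minimal shadow edge set. For $(3)\Rightarrow(1)$, local rigidity again forces $\max_i|\mathcal{L}_i(\mathcal{G})|\le 2$ through the trichotomy, and $\mathcal{E}^*_s$ supplies exactly one shadow edge to every $2$-option node. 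I would then induct along the iterative construction order: the kernel nodes are localized by definition; a later node with three non-collinear localized neighbors is trilaterated, while a later node with two localized neighbors has its reflective ambiguity removed by its shadow edge, which by the shadow-edge mechanism discards the option whose sensing disk would have forced a now-absent regular link. Hence every node is localized over $\mathcal{G}_e$, closing the cycle and, via the Lemma, re-confirming global rigidity of $\mathcal{G}_e$.

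The main obstacle I expect is the interplay between localization \emph{propagation} and the static counts $|\mathcal{L}_i(\mathcal{G})|$. The trichotomy is clean only once one fixes the reading that $\mathcal{L}_i(\mathcal{G})$ records the residual options after maximal propagation through the regular edges; otherwise a node that is $2$-option at an intermediate stage could gain a third localized neighbor later and collapse to a single option, blurring both the boundary between $(1)$ and $(2)$ and the gap between the ``at least a shadow edge'' of $(2)$ and the ``exactly one'' of the minimal set in $(3)$. Making the induction order well-defined, and verifying that such an order exists whenever $\mathcal{G}$ is connected and locally rigid, is the delicate step; here the text's exclusion of three collinear attachments is exactly what guarantees that two localized neighbors always yield two genuinely distinct options rather than a degenerate single one, so that the shadow edge has a real ambiguity to resolve.
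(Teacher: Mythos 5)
Your overall architecture differs from the paper's: you close a single cycle $(1)\Rightarrow(2)\Rightarrow(3)\Rightarrow(1)$, whereas the paper proves the two biconditionals $1)\Leftrightarrow 2)$ and $2)\Leftrightarrow 3)$ separately. In doing so you attempt something the paper never actually does: both halves of the paper's argument for ``$1)\Leftrightarrow 2)$'' are, on inspection, the implication $2)\Rightarrow 1)$ (the second is merely its contrapositive), so the direction $1)\Rightarrow 2)$ is exactly the place where your proposal must stand on its own.

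And precisely there lies the gap. Your argument runs: a node $v_i$ with $|\mathcal{L}_i(\mathcal{G})|=\infty$ can have no incident shadow edge (the definition requires a $2$-option endpoint), hence it ``would remain un-localized in $\mathcal{G}_e$.'' That conclusion does not follow: even with no shadow edge incident to $v_i$, the \emph{neighbors} of $v_i$ may themselves become localized in $\mathcal{G}_e$ through their own shadow edges, after which $v_i$ is trilaterated in $\mathcal{G}_e$ using regular edges alone. Concretely: take the localized triangle $a,b,c$; attach $u_1,u_2,u_3$, each by regular edges to exactly two of $a,b,c$ and each carrying one shadow edge to the third; attach $w$ by regular edges to $u_1,u_2,u_3$ (non-collinear). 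Under the reading of $\mathcal{L}_\cdot(\mathcal{G})$ you fix at the outset (maximal propagation over regular edges only), none of $w$'s neighbors is localized over $\mathcal{G}$, so $|\mathcal{L}_w(\mathcal{G})|=\infty$; yet in $\mathcal{G}_e$ each $u_i$ is localized and $w$ is then trilaterated, so statement $1)$ holds while statement $2)$ fails. Thus $(1)\Rightarrow(2)$ is false under your static reading; it can only be rescued by the ``evolving count'' interpretation you explicitly set aside, and that interpretation in turn erodes the trichotomy on which your other two implications rely. (The same example also contradicts the paper's asserted equivalence between local rigidity and the absence of $\infty$-option nodes, so this is as much a defect of the theorem's formalization as of your proof --- but your proposal, unlike the paper's proof, commits to a concrete argument for this direction, and that argument as written does not go through.)
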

\begin{proof}
Let us first prove  $1)\Leftrightarrow 2)$. 
Note that, if the assumptions in $2)$ are verified, then each node in $\mathcal{G}_e$ is at least tri-connected and the nodes are localized. Conversely, assume that the nodes are not localized over $\mathcal{G}_e$ and that the hypotheses in $2)$ hold: then there is at least a node $v_i$ that is not localized. This implies $|\mathcal{L}_i(\mathcal{G}_e)|\geq 2$, which is contrast with the hypotheses in $2)$.
Let us now prove $2)\Leftrightarrow 3)$. Note that, if $m$ is the number of nodes such that $|\mathcal{L}_i(\mathcal{G})|=2$, then choosing at least a shadow edge for each of such nodes implies $|\mathcal{E}|\geq m$.
Note that a minimal shadow edges set is such that 
$$
|\mathcal{E}^*_s|=\sum_{i=1}^p |\mathcal{L}_i|-n=m
$$
hence $\mathcal{E}^*_s \subseteq \mathcal{E}_s$.
Conversely, choosing a minimal set of shadow edges  implies $2)$ if  $\max_{i=1,\ldots,n}|\mathcal{L}_i(\mathcal{G})|\leq2$.
Note that if $\mathcal{G}$ is not locally rigid, then there is at least a node $v_i$ with  $|\mathcal{L}_i(\mathcal{G})|=\infty$; conversely if there is at least a node $v_i$ with  $|\mathcal{L}_i(\mathcal{G})|=\infty$ the graph $\mathcal{G}$ is not locally rigid.

\end{proof}

The above Theorem provides a useful algorithm for checking if a given graph $\mathcal{G}$ is localized:

%

\begin{algorithm}
\SetKw{Set}{{\bf Set}}
\caption{Graph Localization Check}
\label{AN_algo}
\BlankLine
\KwData{Undirected graph $\mathcal{G}=(\mathcal{V},\mathcal{E})$}
\KwResult{success or fail}
\BlankLine
\For{i=1 \emph{\KwTo} n}{
	\eIf{$|\mathcal{L}_i(\mathcal{G})|\geq 3$}{
	\Return fail\;
	}{
	\Set $c_j=\exists (v_i,v_j)\in \mathcal{E}: d_{ij}\leq \rho \wedge |\mathcal{L}_j(\mathcal{G})|=1$\;
		\Set $c_k=\exists (v_i,v_k)\in \mathcal{E}:
		v_k\neq v_j \wedge d_{ik}\leq \rho \wedge |\mathcal{L}_k(\mathcal{G})|=1$\;
		\Set $c_h=\exists v_k\in \mathcal{V}:$
		$d_{ik}> \rho \wedge |\mathcal{L}_h(\mathcal{G})|=1\wedge \mathcal{L}_h(\mathcal{G})\subset \mathcal{D}_i$\;
		\If{not($c_j \wedge c_k \wedge c_h$)}{
			\Return fail\;
		}
	}
}
\Return success\;
\end{algorithm}

Let us now provide a criterion for the iterative construction of a localized network that does not require the graph to be rigid.
\begin{corollary}
Consider a growing network starting with a complete graph of $3$ localized nodes. then if each new node $v_i$ is such that it is connected to $2$ non collinear preexisting nodes $v_j,v_k$ such that $|\mathcal{L}_j(\mathcal{G})|=|\mathcal{L}_k(\mathcal{G})|=1$ and it is possible to find at least a shadow edge $(v_i,v_k)$, then the resulting network is localized.
\end{corollary}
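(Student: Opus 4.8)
The plan is to argue by induction on the number of nodes in the growing network, combining the cardinality classification for $|\mathcal{L}_i(\mathcal{G})|$ recalled before the shadow-edge definition with the elimination mechanism provided by a shadow edge. Equivalently one could check that the construction meets condition $2)$ of the preceding Theorem at every stage, but I prefer the inductive route since it mirrors the ``growing network'' description and sidesteps any bookkeeping about the final regular-edge count. \emph{Base case:} the initial complete graph on the three kernel nodes is localized by hypothesis, since their coordinates are fixed and hence $|\mathcal{L}_v|=1$ for each of them.

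\emph{Inductive step.} Assume the current network is localized, i.e. $|\mathcal{L}_v|=1$ for every preexisting node $v$, and attach a new node $v_i$ by regular edges to two non-collinear localized nodes $v_j,v_k$. By the classification, joining $v_i$ to exactly two localized nodes gives $|\mathcal{L}_i(\mathcal{G})|=2$: the two options are the intersections of the circles of radii $d_{ij}$ and $d_{ik}$ centred at the known positions of $v_j$ and $v_k$, and non-collinearity guarantees that these two points $[x_{h_1},y_{h_1}]^T$ and $[x_{h_2},y_{h_2}]^T$ are distinct.

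Now I would invoke the shadow edge $(v_i,v_h)$ guaranteed by hypothesis, where $v_h$ is a localized node with $d_{ih}>\rho$ whose position lies in $\mathcal{D}_i=S_{i,h_1}\cup S_{i,h_2}$. As observed right after the definition of shadow edge, $v_h$ cannot lie in $S_{i,h_1}\cap S_{i,h_2}$, for otherwise $v_i$ would sense $v_h$ from either option and a regular edge would be present, contradicting $d_{ih}>\rho$. Hence $v_h$ belongs to exactly one of the two disks, say $S_{i,h_1}$. Were $v_i$ located at $[x_{h_1},y_{h_1}]^T$, then $v_h\in S_{i,h_1}$ would force $d_{ih}\leq\rho$, again contradicting $d_{ih}>\rho$; so this option is inadmissible and only $[x_{h_2},y_{h_2}]^T$ survives. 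Thus $|\mathcal{L}_i(\mathcal{G}_e)|=1$ and $v_i$ is localized. Since localized nodes remain localized as further nodes are appended, the augmented network is localized, and by induction so is the final one.

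The main obstacle is the geometric elimination step. Two things must be pinned down: that the two localization options are genuinely distinct (this is precisely where non-collinearity of $v_j,v_k$ is used, ensuring the two circles meet in two points rather than being tangent), and that the shadow-edge partner $v_h$ lies in the symmetric difference of the two sensing disks rather than in their intersection. The latter ``not in the intersection'' fact is the delicate point; I would justify it exactly as above, combining $v_h\in\mathcal{D}_i$ with the non-communication condition $d_{ih}>\rho$ to force membership in precisely one disk, which then rules out exactly one of the two options and leaves a unique admissible position.
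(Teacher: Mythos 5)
Your proof is correct, but it takes a genuinely different route from the paper's. The paper proves the corollary in two lines by reduction to Theorem 1: under the construction rule, $\max_i|\mathcal{L}_i(\mathcal{G})|\leq 2$ and every node with $|\mathcal{L}_i(\mathcal{G})|=2$ has two localized regular neighbors plus a shadow edge, i.e., condition $2)$ of the theorem holds, and localization follows from the equivalence $1)\Leftrightarrow 2)$. You explicitly decline that reduction and instead run an induction on the construction order, re-deriving the elimination mechanism geometrically: the shadow-edge partner $v_h$ lies in $\mathcal{D}_i$ but cannot lie in $S_{i,h_1}\cap S_{i,h_2}$ (otherwise $d_{ih}\leq\rho$ would hold from either candidate position and a regular edge would exist), hence it lies in exactly one sensing disk and kills exactly one of the two candidate positions. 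What the paper's route buys is brevity and reuse of established machinery; what yours buys is self-containedness and, frankly, more rigor: the paper's proof of Theorem 1 is itself quite terse, whereas your induction makes explicit both the geometric elimination (which the paper only sketches informally after the shadow-edge definition) and the monotonicity fact that already-localized nodes stay localized as further nodes are appended --- a point the paper's static, whole-graph reduction relies on without comment. You also correctly repair the statement's notational slip (the shadow edge must connect $v_i$ to a third localized node $v_h$ with $d_{ih}>\rho$, not to the regular neighbor $v_k$), and your observation that non-collinearity ensures two distinct circle intersections (tangency would give immediate localization anyway) fills a detail both the statement and the paper's proof leave implicit.
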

\begin{proof}
Note that $|\mathcal{L}_i(\mathcal{G})|\leq 2$ for all $v_i\in \mathcal{V}$ and there is a shadow edge for each node $v_i$ such that $|\mathcal{L}_i(\mathcal{G})|=2$; therefore, according to Theorem $1$,  the nodes are localized.
\end{proof}

Let us now provide an algorithm that iteratively constructs a network according to Corollary 1.
The algorithm, starting with a core of $3$ localized nodes, generates nodes in random position.
However such an information is used only to detect the nodes and the distances, while the assessment of the position is done by using the shadow edges.

\begin{algorithm}
\SetKw{Set}{{\bf Set}}
\SetKw{Trilaterate}{{\bf Trilaterate}}
\SetKw{Return}{{\bf Return}}
\caption{Localized Network Construction}
\label{AN_algo}
\BlankLine
\KwData{Graph $\mathcal{G}=(\mathcal{V},\mathcal{E})$ with $3$ connected and localized nodes. Target size $n>3$ of the graph.}
\KwResult{Final graph $\mathcal{G}=(\mathcal{V},\mathcal{E})$; position $[x_i,y_i]^T$ for each node $v_i\in\mathcal{V}$}
\BlankLine
\tcc{Initialisation}
\Set $\mathcal{E}_s=\emptyset$;\\
\tcc{Main Cycle}
\For{i=4 \emph{\KwTo} n}{
        		Generate $v_i$ in random {\em unknown} position\;
		\Set $c_j = \exists v_j\in \mathcal{V} : $\\ $|\mathcal{L}_j(\mathcal{G})|=1 \wedge d_{ij}\leq \rho$\;
		\Set $c_k = \exists v_k \in \mathcal{V}-\{v_j\}: $
		$ |\mathcal{L}_k(\mathcal{G})|=1 \wedge  d_{ik}\leq \rho$\;
		\If {$c_j \wedge  c_k$}
		{
			\Set $\mathcal{V}=\mathcal{V}\cup \{v_i\}$\;
			\Set $\mathcal{E}_{tmp}=\{(v_i,v_j): d_{ij}\leq \rho\}$\;
			\eIf{$|\mathcal{E}_{tmp}|\geq 3$}{
				\tcc{Standard Trilateration}
				\Set $\mathcal{E}=\mathcal{E}\cup \mathcal{E}_{tmp}$\;
				\Trilaterate $[x_i,y_i]^T$ over $(\mathcal{V}, \mathcal{E})$\;
			}
			{
				\tcc{Shadow Edges Trilateration}
				\Set $c_h=\exists v_h : 
				d_{ih}> \rho \wedge  |\mathcal{L}_h(\mathcal{G})|=1 \wedge [x_h,y_h]^T\in \mathcal{D}_i$\;
			}
			\If {$|\mathcal{E}_{tmp}|=2 \wedge  c_h$}
			{
				\Set $\mathcal{E}=\mathcal{E}\cup \mathcal{E}_{tmp}$\;
				\Set $\mathcal{E}_s=\mathcal{E}_s\cup \{(v_i,v_h)\}$\;
				\Trilaterate $[x_i,y_i]^T$   over $(\mathcal{V},\mathcal{E}\cup \mathcal{E}_s)$\;
			}
		}
}
\end{algorithm}

\section{Simulation Results}
\label{sec:simulation_results}

Let us now provide some simulation results and a comparison between the proposed methodology and the TNC algorithm for the construction of localized graphs.
In the following we will consider networks where the nodes are embedded in the unit square in $\mathbb{R}^2$, i.e., for each node $i$, $x_i,y_i\in[0,1]$.

Let us first provide an example of execution of the proposed algorithm. Figure \ref{fig:grafo40} shows an example of graph with $40$ nodes. The nodes with blue incident edges are localized according to both the proposed algorithm and the standard localization algorithm. The nodes with incident shadow edges (purple edges) are localized according to the proposed algorithm; note that in the figure all possible shadow edges are shown. The nodes with green incident edges are not localized. The proposed algorithm localizes $29$ nodes, the standard algorithm localizes $18$ nodes.

Let us now provide some results on the comparison between the proposed algorithm and the TNC algorithm.
Figures  \ref{fig:1_shadow_edge_loc}, \ref{fig:3_difference} and \ref{fig:4_percentage} show a comparison with the TNC approach without shadow edges; results are the average of $50$ runs.
Figure \ref{fig:1_shadow_edge_loc} shows the percentage of localized nodes obtained by Algorithm 2 with respect to total generated nodes, plotted against the distance radius $\rho$ and the network size.
As shown by the Figure, and just as expected, for a fixed network size there is a threshold on the distance radius, and the nodes are almost all localized for valued of $\rho$ above such a threshold. Note further that such a threshold value tends to decrease when the network size increases.

\begin{figure}[!ht]
\begin{center}
\includegraphics[width=3.4in]{./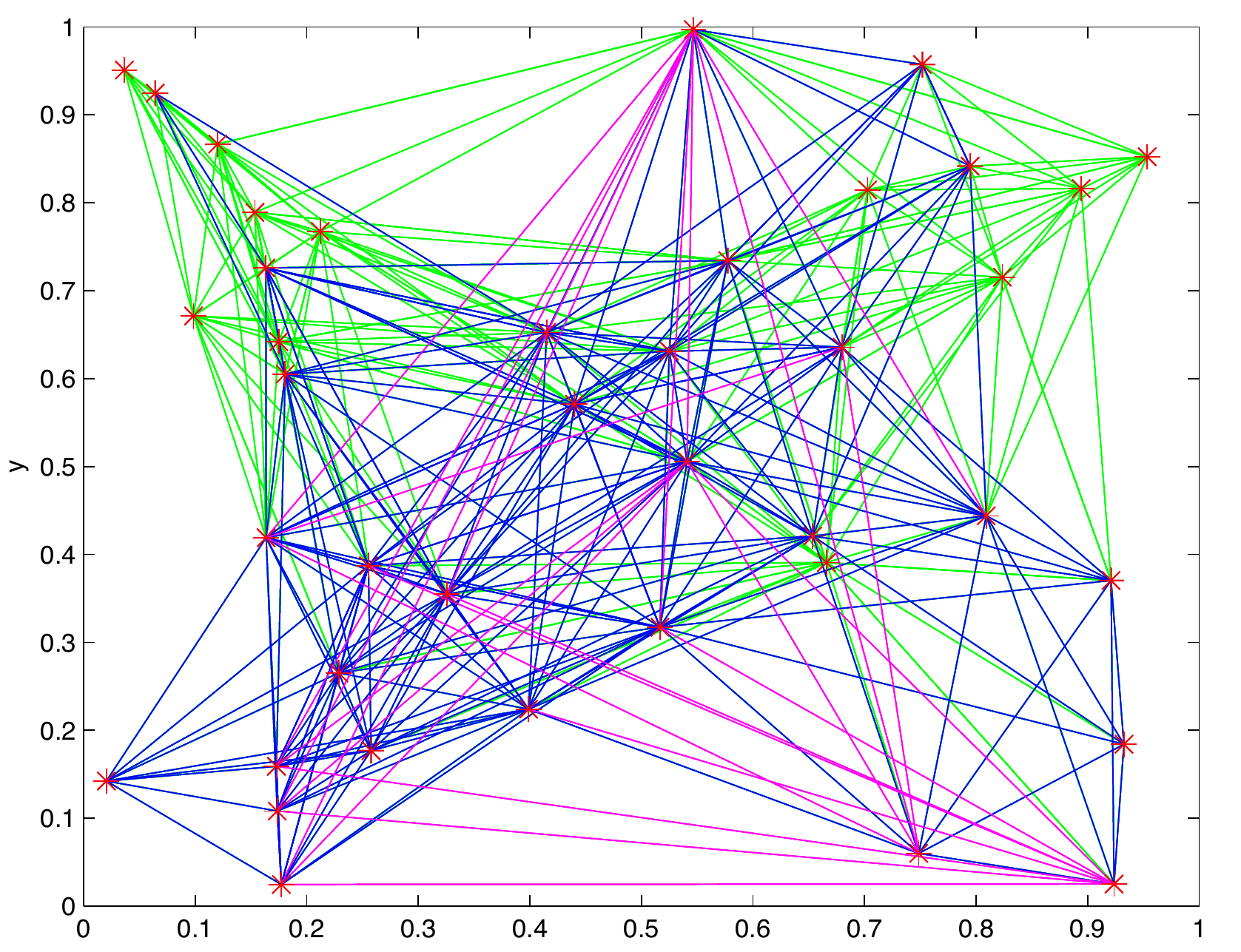}
\caption{Example of graph with $40$ nodes. The nodes with blue incident edges are localized according to both the proposed algorithm and the TNC algorithm. The nodes with incident shadow edges (purple edges) are localized according to the proposed algorithm. The nodes with green incident edges are not localized. The proposed algorithm localizes $29$ nodes, the TNC algorithm localizes $18$ nodes.}
\label{fig:grafo40}
\end{center}
\end{figure}

Figures \ref{fig:3_difference} and \ref{fig:4_percentage} show, respectively, the difference and the ratio of the percentage of localized nodes obtained by Algorithm 2 and by the TNC algorithm.
Both Figures highlight that the proposed algorithm is particularly effective for values of $\rho$ below the network size-dependent  threshold values identified in Figure \ref{fig:1_shadow_edge_loc}. 
Specifically, note that the percentage of the proposed algorithm are always greater or equal  than the TNC algorithm; moreover there are up to $25\%$ more localized nodes (Figure \ref{fig:3_difference}), that is about twice the percentage of localized nodes of the TNC algorithm (Figure  \ref{fig:4_percentage}).
Figure \ref{fig:percshadowedge} shows that, the range of values of $\rho$ and $N$ where the proposed algorithm enhances the localization of nodes, is indeed the range where the shadow edges are created.
Note that, in such a range, the amount of shadow edges is about $10-12 \%$ of the total number of edges created.

\begin{figure}[!ht]
\begin{center}
\includegraphics[width=3.5in]{./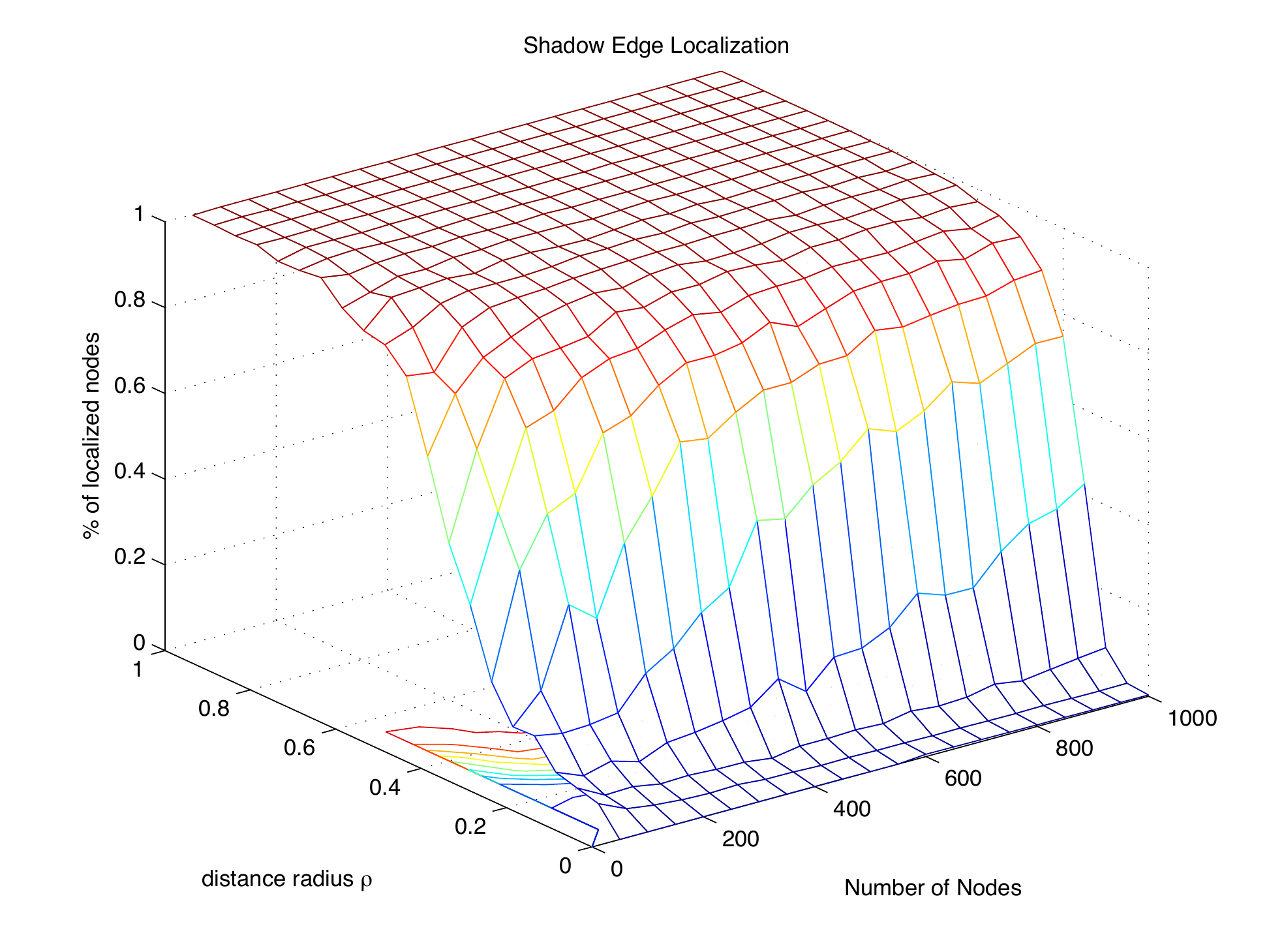}
\caption{Percentage of localized nodes obtained by Algorithm 2 with respect to total generated nodes, plotted against the distance radius $\rho$ and the network size. Results are the average of $50$ runs.}
\label{fig:1_shadow_edge_loc}
\end{center}
\end{figure}

\begin{figure}[!ht]
\begin{center}
\includegraphics[width=3.5in]{./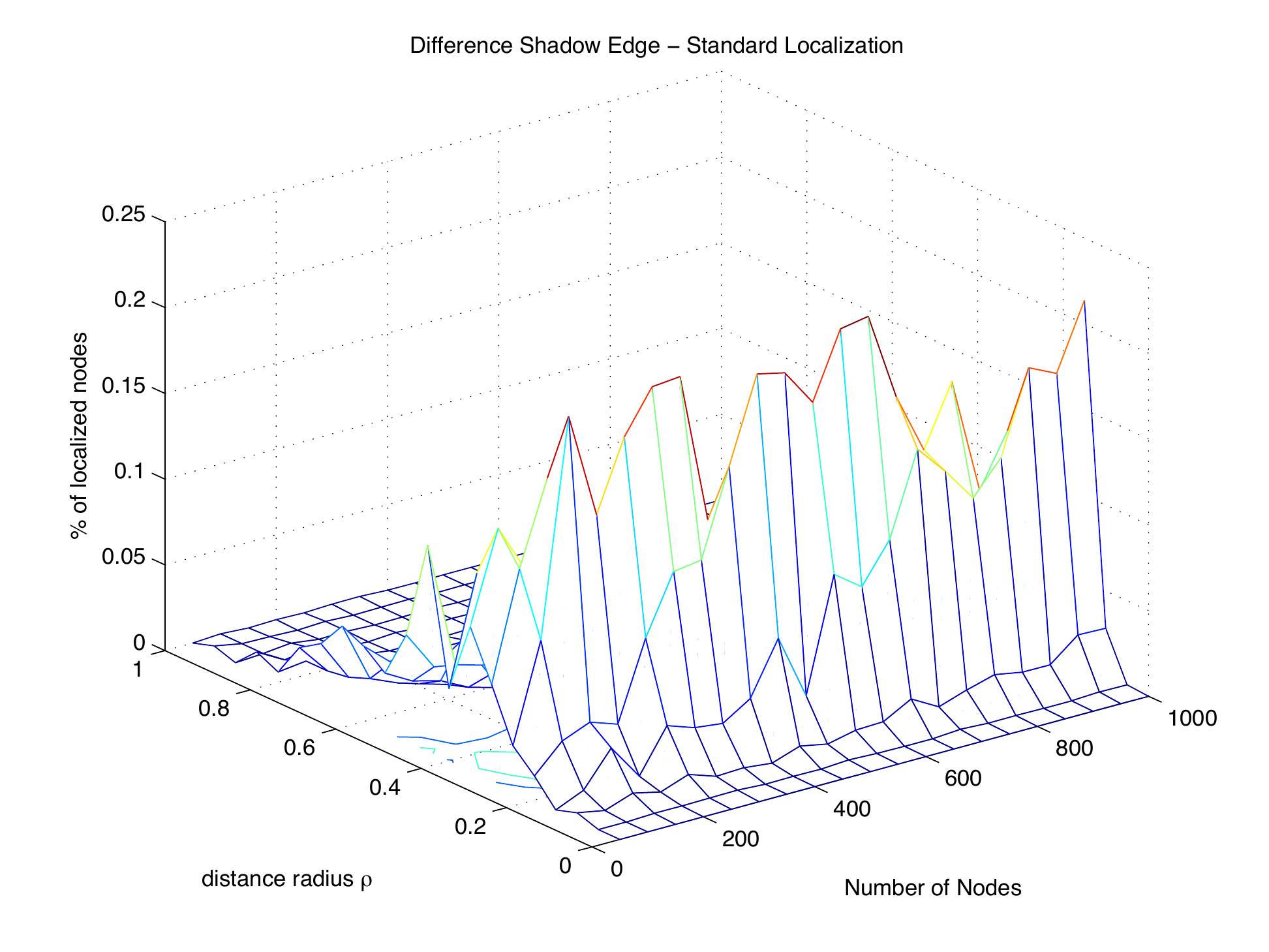}
\caption{Difference between the percentage of localized nodes with respect to total generated nodes, obtained by Algorithm 2 and by the TNC algorithm, plotted against the distance radius $\rho$ (the same for all nodes) and the network size. Results are the average of $50$ runs.}
\label{fig:3_difference}
\end{center}
\end{figure}

\begin{figure}[!ht]
\begin{center}
\includegraphics[width=3.5in]{./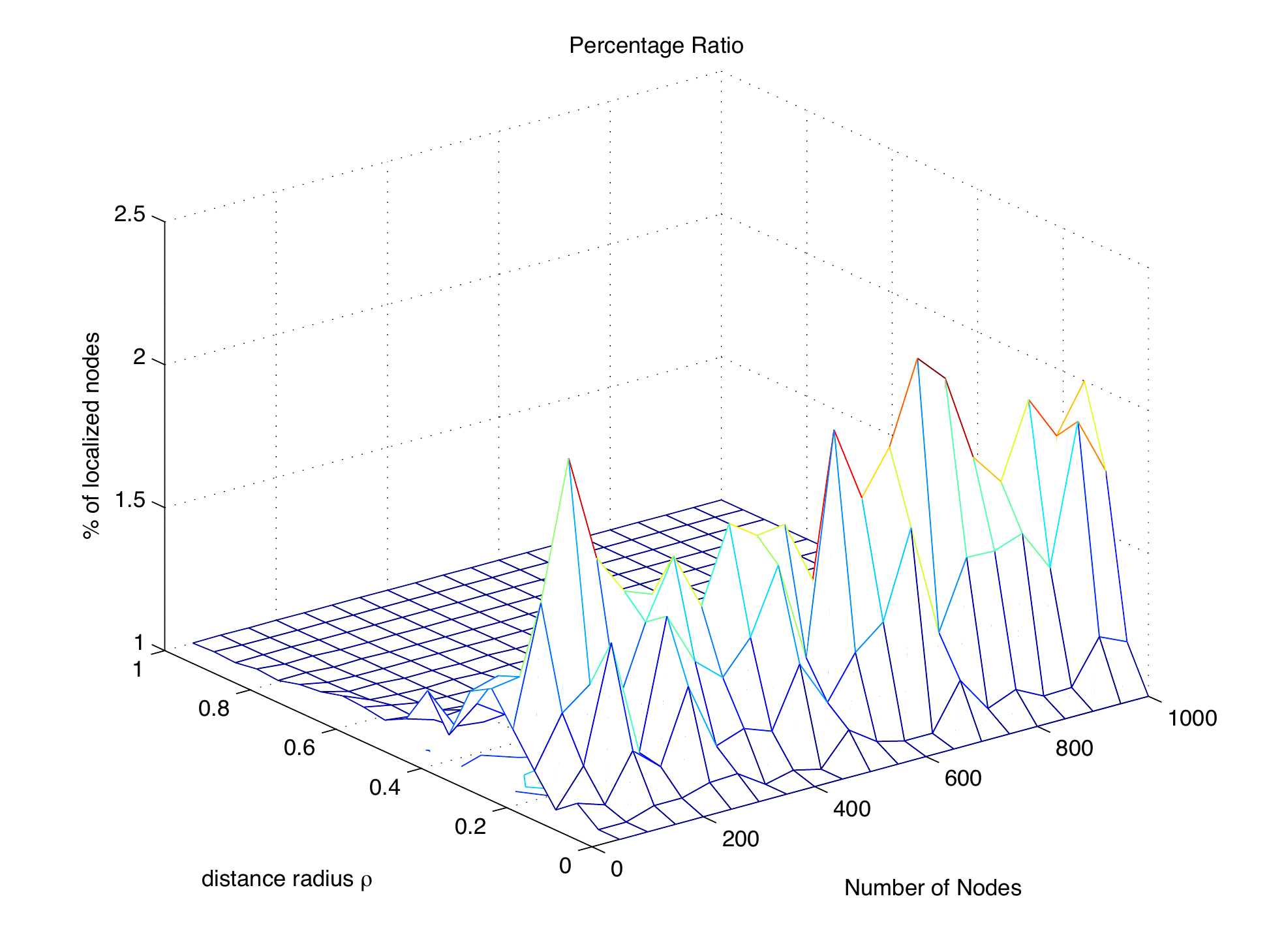}
\caption{Ratio between the percentage of localized nodes with respect to total generated nodes, obtained by Algorithm 2 and by the TNC algorithm, plotted against the distance radius $\rho$ and the network size. Results are the average of $50$ runs.}
\label{fig:4_percentage}
\end{center}
\end{figure}

\begin{figure}[!ht]
\begin{center}
\includegraphics[width=3.5in]{./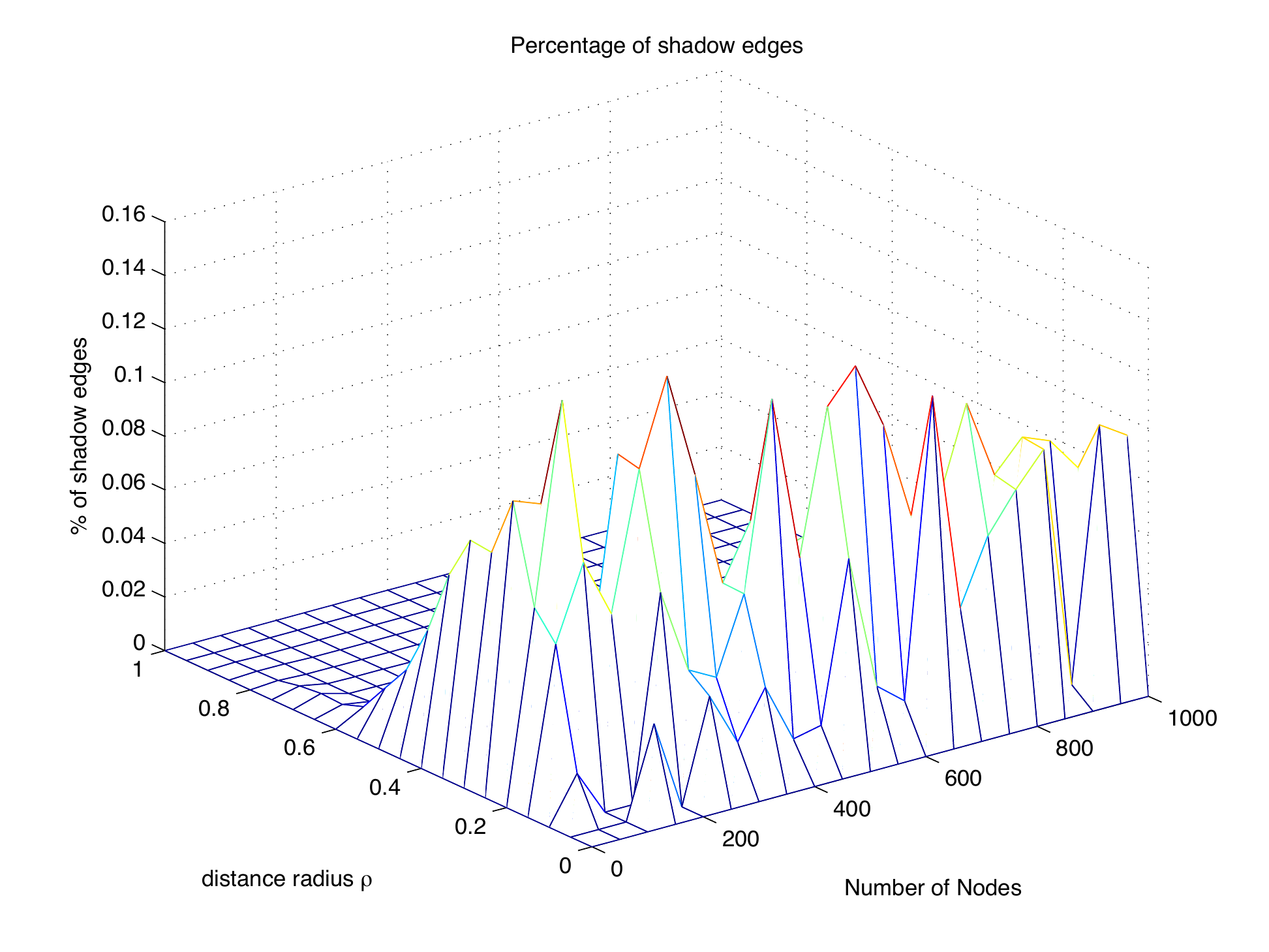}
\caption{Number of shadow edges created by the algorithm, plotted against the distance radius $\rho$  and the network size. Results are the average of $50$ runs.}
\label{fig:percshadowedge}
\end{center}
\end{figure}

\section{Conclusions} 
\label{sec:conclusions}
In this paper a methodology for the construction of localized networks is proposed which is less demanding than the TNC algorithm. The idea is to use the information about the lack of connectivity between nodes in order to enhance the localization procedure, thus requiring (in 2D) each node to sense just $2$ neighbors nodes, instead of the 3 required by the TNC algorithm.

Future work will be aimed to extend the methodology in order to provide a reasonably small set of localization options for the nodes, in the case where only a single real link is available, by considering multiple shadow edges.
The possibility to localize, at least to a certain extent, a network made in such a way would allow agents to construct radio bridges (e.g., chains) for the ad-hoc communication in harsh environments (e.g., robotic swarms operating in tunnels or corridors during a fire blast). 
Moreover the possibility to distribute the localization procedure will be inspected.

%
%
%
%
%
%

\bibliographystyle{IEEEtran}

\end{document}